\newtheorem{theorem}{Theorem}[section]
\newtheorem{proposition}[theorem]{Proposition}
\def\squarebox#1{\hbox to #1{\hfill\vbox to #1{\vfill}}}
\newcommand{\qed}{\hspace*{\fill}
\vbox{\hrule\hbox{\vrule\squarebox{.667em}\vrule}\hrule}\smallskip}
\newenvironment{proof}{\noindent{\bf Proof:~~}}{\(\qed\)}
\begin{document}
\title{Efficiently Enumerating Scaled Copies of Point Set Patterns}

\author{Aya Bernstine\footnote{
School of Computer Science and Engineering , The Hebrew University, Jerusalem, Israel. aya.bernstine@mail.huji.ac.il}   
\hspace{1mm} and Yehonatan Mizrahi\footnote{
School of Computer Science and Engineering , The Hebrew University, Jerusalem, Israel. yehonatan.mizrahi@mail.huji.ac.il
}
}
\date{}
\maketitle

\maketitle

\begin{abstract}

Problems on repeated geometric patterns in finite point sets in Euclidean space are extensively studied in the literature of combinatorial and computational geometry. Such problems trace their inspiration to Erd\H{o}s' original work on that topic. In this paper, we investigate the particular case of finding scaled copies of any pattern within a set of $n$ points, that is, the algorithmic task of efficiently enumerating all such copies. We initially focus on one particularly simple pattern of axis-parallel squares, and present an algorithm with an $O(n\sqrt{n})$ running time and $O(n)$ space for this task, involving various bucket-based and sweep-line techniques. Our algorithm is worst-case optimal, as it matches the known lower bound of $\Omega(n\sqrt{n})$ on the maximum number of axis-parallel squares determined by $n$ points in the plane, thereby solving an open question for more than three decades of realizing that bound for this pattern. We extend our result to an algorithm that enumerates all copies, up to scaling, of any full-dimensional fixed set of points in $d$-dimensional Euclidean space, that works in time $O(n^{1+1/d})$ and space $O(n)$, also matching the corresponding lower bound due to Elekes and Erd\H{o}s.

\end{abstract}

\section{Introduction}

The problems of geometric point pattern matching and the identification of repeated geometric patterns are fundamental computational problems with a myriad of applications, ranging from computer vision \cite{inproceedings2, MOUNT199917}, image and video compression \cite{article2}, model-based object recognition \cite{inproceedings3}, structural biology \cite{article5} and even computational chemistry \cite{inproceedings4}. Such problems were originally motivated by questions from extremal combinatorics regarding the maximal number of occurrences of a given pattern determined by a set of points, a field historically inspired by Erd\H{o}s' well-known Unit Distance Problem (1946) regarding the maximal number of unit distance pairs induced by such sets \cite{Erds1946OnSO}. 
Our paper approaches the computational problems of identifying patterns using tools and techniques encountered in the framework of computational geometry, ensuring exact, provably correct and efficient solutions. For a detailed survey that discusses the applicability of these methods and their relevance to such problems, see \cite{article17}.

In this paper, we analyze a particular case of interest -- the problem of identifying shifted and scaled copies of  \textit{any} point set pattern in Euclidean space, termed homothetic copies of this set. For brevity, we refer to those as copies of the pattern. Note that each copy is certainly determined by the images of any two points in the set.
We begin with focusing on one of the simplest forms of these problems, due to its historical significance, where the goal is to identify repeated patterns of axis-parallel squares in the plane. Simply put, we strive for an efficient algorithm that lists all subsets of four points that constitute the vertices of a square with axis-parallel edges, an algorithm that could be easily generalized to treat the identification of any pattern in any dimension. Note that coping with the challenge of enumerating all copies of a given pattern trivially encompasses a solution for the task of finding one single copy of that pattern in a background set of points, namely the task of pattern matching. This manifests the importance of dealing with the more general of those two challenges, thereby solving both at one fell swoop.

This problem was extensively studied in the literature of combinatorial and computational geometry. As articulated in 1990 by van Kreveld and de Berg \cite{10.1007/3-540-52292-1_25}, the maximum possible number of axis-parallel squares determined by $n$ points in the plane is $\Theta(n\sqrt{n})$, and given any set of $n$ points, all such squares can be enumerated in time $O(n\sqrt{n} \log n)$\footnote{The analysis given throughout this paper of time and space complexities is based on the conventional and very standard word-RAM model of computation \cite{fredman1990blasting}.} and space $O(n)$. They consolidate their result by showing an extension of their algorithm to one that enumerates all full-dimensional axis-parallel hypercubes in $d$-dimensional Euclidean space in time $O(n^{1+1/d}\log n)$, with a logarithmic-factor gap separating this computational result from the corresponding combinatorial geometric bound of a maximum of $\Theta(n^{1+1/d})$ possible hypercubes, raising the challenge of overcoming that gap as an open question. The latter combinatorial result was further extended by Elekes and Erd\H{o}s \cite{elekes1994similar}. They established a bound of $\Theta(n^{1+1/d})$ on the maximum number of copies of \textit{any} pattern in $\mathbb R^d$, assuming the pattern is full-dimensional and has rational coordinates. The computational aspect of this generalization occurs in \cite{Bra2002CombinatorialGP}, providing an algorithm that works in time $O(n^{1+1/d}\log n)$ for the task of enumerating all copies of any fixed full-dimensional set consisting of a constant number of points in $d$-dimensional Euclidean space, exhibiting the same logarithmic-factor gap between the two results. We remark that the complexity of finding copies in these algorithms becomes smaller with increasing dimension, a rare occurrence.

\subsection{Our Results}

Our main result of this paper is an efficient deterministic algorithm that enumerates all scaled copies of any given pattern consisting of a constant number of points in $d$-dimensional Euclidean space.
As earlier stressed, we first handle the case of listing all axis-parallel squares in the plane. Although being a special case that follows immediately from our main theorem, there are several motivations for its individual treatment. First, it arguably forms the simplest symmetric full-dimensional pattern in the plane. Consequently, it earned its right of having the first attempt, as we are aware of, among all problems of listing scaled copies of some pattern, of admitting an improved performance compared to the trivial algorithm, in terms of running time analysis. Adaptations to other patterns appeared, e.g., in \cite{Bra2002CombinatorialGP}, but \cite{10.1007/3-540-52292-1_25} were the first to raise the question of whether it is computationally feasible to realize the combinatorial bound of $\Theta(n\sqrt{n})$ possible axis-parallel squares, thereby improving their result having the running time of $O(n\sqrt{n} \log n)$. Our algorithm which is, in some sense, an amplified version of their original algorithm, is therefore a \textit{worst-case optimal} solution for this problem, and fully addresses this question which was open for more than three decades.

A second motivation for separately handling this case, is that it results in a relatively natural algorithm having less obfuscating technicalities as compared to the general case of treating arbitrary patterns. In other words, it crystallizes the additional techniques we incorporate within the original solution of \cite{10.1007/3-540-52292-1_25}, yielding the required improvement. Briefly sketched, the gist of these techniques includes the following ideas. First, we use a reduction from arbitrary input points to points having ''compressed'' coordinates, that is, we relabel their values, allowing the use of linear non-comparative sorting methods on the result. Second, we deploy a sweep-line sub-procedure that marks points forming a square, instead of searching those in a set, avoiding the corresponding logarithmic factor without involving probabilistic tools as hashing. Third, in order to carry out the mentioned sweep-line process correctly, we also relabel the sum and the difference of the input coordinates, in addition to the relabeling of the coordinates themselves. We show why the last step is crucial for the algorithm to succeed in the full elaboration of our solution in Section 2.

\vspace{3mm}

\noindent \textbf{Theorem:}
    \textit{Given a planar set of points $P$ of size $n$, all axis-parallel squares defined by points from $P$ can be enumerated in time $O(n\sqrt n)$ and $O(n)$ space.}
\vspace{3mm}



As previously stressed, this algorithm matches, in the worst-case, the known bound of $\Theta(n\sqrt{n})$ on the maximum possible number of axis-parallel squares.
We further extend this result to a companion proposition, that discusses the task of reporting all full-dimensional hypercubes in $d$-dimensional Euclidean space. A solution for the latter task that works in time $O(n^{1+1/d})$ and space $O(n)$, which also matches the corresponding known lower bound and addresses a similar open question, is provided.

Establishing our main result, which is concerned with finding scaled copies of \textit{any} arbitrary and full-dimensional pattern in $d$-dimensions, relies on the majority of the ideas behind the proof of the previous theorem. Assuming the pattern consists of a constant number of points, we provide an algorithm with a running time of $O(n^{1+1/d})$. Similarly to the proof idea of the previous theorem mentioned earlier, to accomplish this task we relabel some particular and carefully chosen affine transformations of the input coordinates along with the coordinates themselves, a relabeling that creates a convenient representation of the points, for the purpose of scanning them using a sweep-line technique.

\vspace{3mm}

\noindent \textbf{Theorem:}
    \textit{Given a fixed set of points $Q$ of full dimension in the $d$-dimensional Euclidean space, and a set of points $P$ of size $n$, all scaled copies of $Q$ defined by points from $P$ can be enumerated in time $O(n^{1+1/d})$ and $O(n)$ space.}
\vspace{3mm}

The running time in this theorem matches the corresponding combinatorial bound of the same magnitude, and improves the best known running time for this task of $O(n^{1+1/d}\log n)$ \cite{Bra2002CombinatorialGP}.
We provide the proof of the last theorem in Section 3.

Note that although the improvement suggested is by a logarithmic factor, the upshot is a \textit{worst-case optimal} algorithm in terms of running time analysis, even for the most general case of arbitrary patterns. This can be compared with a work by de Rezende and Lee \cite{article6}, who studied the problem of enumerating all \textit{rotated} copies of a given pattern, improving the running time of the trivial algorithm for this task by a logarithmic factor as well. Other than that, aside from the worst-case optimality of our results, the techniques deployed form a rather general scheme, and may therefore be potentially useful to treat other variants of the problem studied.

\subsection{More Related Work}

An excellent survey that covers a variant of our problem, that deals with rotated copies instead of scaled ones, along with combinations of rotations and scaling, affine transformations of patterns, Ramsey-type graph-theoretic related problems and more can be found in \cite{Bra2005Chapter2P}.

A result related to this variant, namely that of rotated copies, by Cho and Mount \cite{article115} considers the problem of approximated pattern matching allowing only rotations. They showed an efficient algorithm that admits a $3.13$-approximation to the optimum. Namely, it returns a transformation whose Hausdorff distance is at most a factor $3.13$ larger than the optimal Hausdorff distance from the pattern.

In \cite{10.1007/3-540-52292-1_25}, the problem of counting the number of all axis-parallel rectangles (boxes in $d$-dimensions) is also considered, providing an algorithm with a running time of $O(n^{2-1/d})$. Note that as opposed to enumerating axis-parallel squares, it is an increasing function of $d$.

\subsection{Paper Organization}
The rest of the paper is organized as follows. In Section 2 we provide an algorithm that efficiently enumerates all axis-parallel squares. An elaboration on the techniques used for the speeding-up previous results is also given. Section 3 generalizes the result from Section 2 to full-dimensional hypercubes, and ultimately to an efficient enumeration scheme of all copies of any arbitrary point set pattern. Concluding remarks and directions for further research are given in Section 4.


\section{Axis-Parallel Squares}
The purpose of the current section is to present an efficient algorithm that reports all axis-parallel squares defined by a planar set of points. That is, given $n$ points on the plane, the algorithm addresses the task of listing all subsets of four points that form the vertices of an axis-parallel square. Obviously, this task can be tackled by a trivial algorithm that sorts the input points, say, according to their $x$ coordinate, and then scans every pair of points residing in the same column and determines whether that pair can be complemented to a square from the right, by searching in a set, for instance. However, the performance of this algorithm is rather poor in terms of complexity analysis, as it results in a running time of $O(n^2\log n)$, e.g., when all points share the same $x$ coordinate.

A more efficient algorithm, devised by van Kreveld and de Berg \cite{10.1007/3-540-52292-1_25},
manages to improve upon this by first treating only columns with at most $\sqrt n$ points, in an identical fashion to that described in the trivial algorithm, then deleting those columns, and finally applying a similar procedure on the remainder, only considering rows instead of columns. In this section, we refer to columns with at most $\sqrt n$ points as ''short'' columns.
The full description of their algorithm is as follows:

\vspace{3mm}
\textbf{Squares-Listing}$(p_1,...,p_n)$:
\begin{enumerate}
    \item Build a balanced search tree $T$ and an array $A$ on the input points, both sorted by the $x$ coordinate.
    \item For every pair of points $p$ and $q$ in $A$ residing in a short column, search in $T$ whether they can be complemented to a square from the right or from the left. Report each square found, unless the other two vertices defining it are on a short column to the left of $p$ and $q$, to avoid reporting the same square more than once.
    \item Delete all short columns from $T$ and $A$, and convert each remaining point $(x,y)$ to $(y,x)$.
    \item Apply step 2 on the remaining converted points.
\end{enumerate}

\vspace{3mm}
The correctness of the algorithm is proven in their work, along with its complexity analysis. They show that it operates in a running time of $O(n\sqrt n \log n)$ and $O(n)$ space.
As we integrate most of its ideas in our amplified version, we present the analysis of the relevant steps in detail in the proof of our theorem in the current section, later on.

We strive for an algorithm that does not include the above logarithmic factor, i.e., one with a running time of $O(n\sqrt{n})$ and space $O(n)$. As shown in \cite{10.1007/3-540-52292-1_25}, there are configurations of $n$ points that define $\Theta(n\sqrt{n})$ squares, or even more generally:
\begin{theorem} (van Kreveld, de Berg)
    For a set $P$ of $n$ points in d-dimensional space, the maximum possible number of $2^d$ points that form the vertices of an axis-parallel hypercube is $\Theta(n^{1+1/d})$.
\end{theorem}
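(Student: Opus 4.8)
The plan is to prove the $\Theta(n^{1+1/d})$ bound in two halves: an upper bound showing no $n$-point set determines more than $O(n^{1+1/d})$ axis-parallel hypercubes, and a matching construction. For the upper bound, I would first recall that an axis-parallel hypercube in $\mathbb{R}^d$ is determined by a side length $\ell > 0$ together with a ``lower corner'' vertex, hence by a pair of opposite vertices lying on a line in direction $(1,1,\dots,1)$; equivalently, it is determined by any one of its vertices and the side length. The key combinatorial idea is a charging/counting argument parameterized by side length. Fix a side length $\ell$ that actually occurs. For that $\ell$, consider the graph $G_\ell$ on the point set $P$ whose edges join points differing by $\ell e_i$ for some coordinate direction $i$; a hypercube of side $\ell$ corresponds to a ``combinatorial cube'' in $G_\ell$, i.e., $2^d$ points closed under all $d$ such translations. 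The number of such cubes is controlled by the number of edges of $G_\ell$, and the crucial point is that summing edge counts over all occurring side lengths is bounded: each ordered pair of points contributes an edge for at most one value of $\ell$ in each of the $d$ directions, so $\sum_\ell |E(G_\ell)| = O(n^2)$ is too weak — instead one needs the sharper observation, going back to the proof of the unit-distance-type bounds in a grid, that for a fixed direction the edges of all the $G_\ell$ form a union of paths, and a hypercube forces a point to have degree $\geq d$ simultaneously in enough of these, so a convexity/Cauchy–Schwarz argument over the degree sequence yields the $n^{1+1/d}$ bound. This is essentially the argument in van Kreveld–de Berg for $d=2$ (short columns versus tall columns), generalized: classify, for each direction $i$, the ``lines'' parallel to $e_i$ as short (at most $n^{1/d}$ points) or long; a hypercube is caught by a short line in some direction, there are only $O(n^{1-1/d})$ long lines per direction so all points on long lines across all directions number $O(n)$ with multiplicity $d$, and recursing / iterating the short-line count $\binom{n^{1/d}}{2}$ over $O(n^{1-1/d})$ lines gives $O(n \cdot n^{1/d}) = O(n^{1+1/d})$ per level, with $d$ levels.

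For the lower bound, I would exhibit the section of the integer lattice: take $P$ to be the grid $\{0,1,\dots,m-1\}^d$ with $m = \lceil n^{1/d} \rceil$, so $|P| = \Theta(n)$. The number of axis-parallel hypercubes of side $\ell \in \{1,\dots,m-1\}$ contained in this grid is $(m-\ell)^d$, so the total is $\sum_{\ell=1}^{m-1}(m-\ell)^d = \sum_{k=1}^{m-1} k^d = \Theta(m^{d+1}) = \Theta(n^{1+1/d})$, as desired.

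The main obstacle I expect is making the upper bound's short-line / long-line recursion genuinely tight in all $d$ directions simultaneously: when one deletes the short lines in direction $1$, the surviving point set can still be large, and one must argue that iterating the deletion over directions $2,\dots,d$ terminates after $d$ rounds having accounted for every hypercube, and that each round contributes only $O(n^{1+1/d})$. The delicate point is the bookkeeping that a hypercube deleted in round $i$ is charged exactly once and that the ``long lines in all $d$ directions'' case — points each of whose $d$ axis-parallel lines is long — cannot itself support more than $O(n^{1+1/d})$ hypercubes; here I would bound that residual case directly, since the total number of such points is $O(n)$ but their incidence structure is constrained: a hypercube among them still needs $2^d$ distinct points and the side lengths are limited, so a direct product-type count $O(n^{1-1/d})^d \cdot (\text{#side lengths})$ closes it. Since the theorem is attributed to van Kreveld–de Berg and only invoked as a known lower bound later in the paper, I would in fact keep the exposition brief, give the lattice construction in full (it is short and self-contained), and cite \cite{10.1007/3-540-52292-1_25} for the details of the matching upper bound rather than reproducing the full degree-sequence convexity argument.
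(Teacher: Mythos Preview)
The paper does not actually prove this theorem: it is stated with attribution to van Kreveld and de Berg and simply cited from \cite{10.1007/3-540-52292-1_25}, with no argument given. Your final instinct---give the grid construction for the lower bound and cite the original paper for the upper bound---therefore matches the paper's own treatment exactly (indeed the paper is even terser, citing for both directions).

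Your sketch is nonetheless correct and worth a brief remark. The lower-bound construction $\{0,\dots,m-1\}^d$ with $m=\lceil n^{1/d}\rceil$ and the count $\sum_{\ell=1}^{m-1}(m-\ell)^d=\Theta(m^{d+1})=\Theta(n^{1+1/d})$ is the standard one and is exactly right. For the upper bound, the short-line/long-line iteration you describe is precisely the van Kreveld--de Berg argument, and in fact the present paper reproduces that same reasoning implicitly in the complexity analysis of Proposition~3.1: handling pairs on short lines costs $\sum_i s_i^2\le n^{1/d}\sum_i s_i\le n^{1+1/d}$ per direction, and after deleting short lines in each of the first $d-1$ directions, any surviving long line in the last direction would force more than $(n^{1/d})^d=n$ points. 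The ``main obstacle'' you flag---that the bookkeeping across rounds must be made precise---is real, and the paper's own version of it (in the proof of Proposition~3.1) is itself somewhat informal; but the argument does go through, so there is no genuine gap in your proposal.
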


This theorem obviously induces a lower bound on the running time of the optimal algorithm for the enumeration task. Thus, our result in this section bridges the logarithmic-factor gap between this lower bound, and the previously best known upper bound for this objective, attained by Squares-Listing$(p_1,...,p_n)$.

\subsection{Main Ideas Towards an Improvement}

The logarithmic factor in the running time of Squares-Listing$(p_1,...,p_n)$ is present due to the step in which we search points in a set.
There are several steps we deploy in order to overcome this. Assume first, for simplicity, that all input points have coordinates in $\{1,...,n\}$. Then, instead of searching in a set the two \textit{query} pairs that complement the pair $(x,y),(x,y+\delta)$ to a square, i.e., the pair $(x+\delta,y),(x+\delta,y+\delta)$ and the pair $(x-\delta,y),(x-\delta,y+\delta)$, we put all query points (points being part of some query pair) along with the original input points in an array, apply radix sort on that array, treating each point as a two-digit number, and mark all positive queries. That is, we mark each query point adjacent to an existing input point sharing the same coordinates, or to an already marked identical query point. The resulting marked pairs define the appropriate existing squares. Using this non-comparative sorting algorithm, along with its linear scanning that follows, we obtain a process that does not involve the logarithmic factor of marking a query point.

Generally, however, we cannot assume that all coordinates are taken from $\{1,...,n\}$. We address this issue by ''shrinking'' the coordinates of all input points. In other words, we label each $x$ coordinate in the input with a value in $\{1,...,n\}$, and likewise for the $y$ coordinate, so effectively we reduce the problem to the aforementioned restricted version. The main caveat of this idea, though, is that arithmetical considerations regarding labels are invalid. For example, if we map the pair of points $(10,20)$ and $(10,40)$ to the labeled points $(1,2)$ and $(1,4)$ respectively, the pre-translated pair may be complemented to a square using the pair $(30,20)$ and $(30,40)$, but there is no guarantee that $(30,20)$, if exists in the input, was mapped to $(3,2)$.

To overcome the above issue, we avoid using arithmetic considerations when defining a query pair of points $q_1,q_2$ that complement the post-labeled pair (i.e., the points after being renamed) $p_1=(x,y),p_2=(x,y+\delta)$ to a square (from the right, assuming that $\delta>0$, for example). That is, instead of using the invalid label $x+\delta$ as a coordinate, we define the pair $q_1,q_2$ using \textit{identical} labels as those of $p_1,p_2$ in the following manner: Note that $q_2$, aside from sharing the same $y$ coordinate as $p_2$ (assuming $q_2$ lies above $q_1$, for instance), it also shares the same Manhattan distance from the origin as $p_1$\footnote{This is true assuming all points have positive coordinates, but we can safely assume this, as otherwise we can shift the input so it maintains that form, without affecting our result.
}, namely they have the same sum of coordinates. The full elaboration of this consideration appears in the course of the correctness proof of our algorithm. Therefore, if we label also each \textit{sum of coordinates} in the input points, similarly to the labeling of the coordinates themselves, we can search for that exact label and avoid the arithmetic considerations. In short -- the query point $q_2$ is defined having the same $y$ label as $p_2$ and the same $x+y$ label as $p_1$. Similarly, we label also the differences of the coordinates, so $q_1$ has the same $y$ label as $p_1$ and the same $y-x$ label as $p_2$, as we show in the proof of our following theorem. Note that the label of $x+y$ need not be the sum of the labels of $x$ and $y$ in this approach. Similar considerations apply for the pair that complements to a square from the left.

Another observation related to the idea described in the last paragraph, which will be useful in further developments of the current section's algorithm, is that rotating any vector $(x,y)$ in the plane by $45^{o}$ and stretching it by a factor of $\sqrt{2}$ yields the vector $(x+y,y-x)$, as illustrated in Figure 1. This technique can be shown to establish, for example, the famous geometric equivalence between the $l_1$ and $l_{\infty}$ norms (see, e.g., \cite{article187}), namely that the $l_1$ distance between any two pre-rotated points equals the $l_{\infty}$ distance between the post-rotated points, after scaling. For our purposes, we note that instead of viewing the process as labeling the $x+y$ and $y-x$ values of the coordinates, as suggested above, one can think of this process as the labeling of the post-rotated points' coordinates -- in the case of axis-parallel squares, a rotation by the aforementioned factor.

\begin{figure}
    \centering
    \includegraphics[width=14cm]{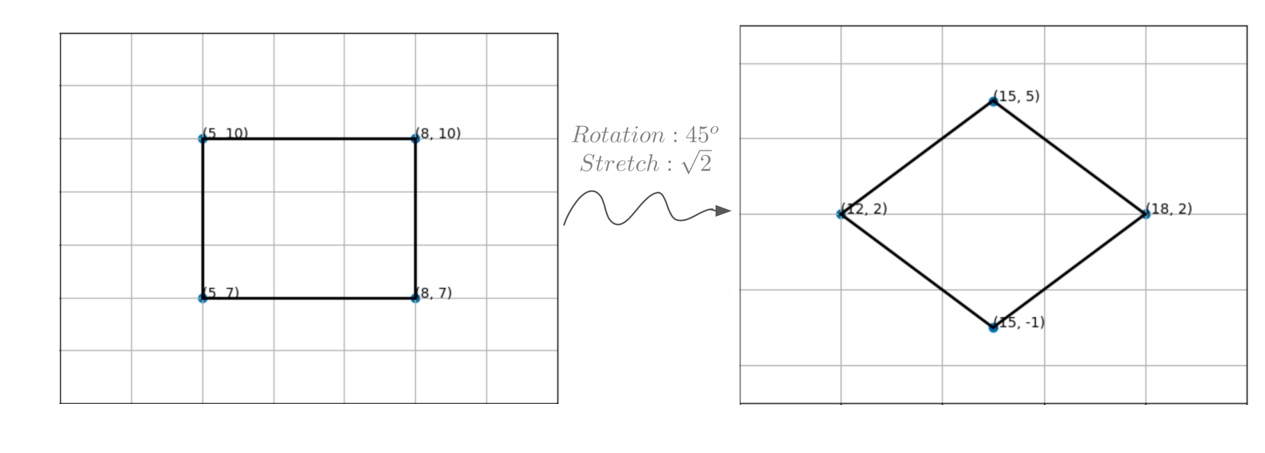}
    \caption{Illustrating the rotation by $45^{o}$ and the stretch by a factor of $\sqrt{2}$ applied on four points in the plane. Each point $(x,y)$ was converted to the point $(x+y,y-x)$ as a result.}
    \label{fig:galaxy}
\end{figure}

\subsection{The Efficient Solution}

These ideas from the previous subsection lead to our main theorem of this section:

\begin{theorem}
    Given a planar set of points $P$ of size $n$, all axis-parallel squares defined by points from $P$ can be enumerated in time $O(n\sqrt n)$ and $O(n)$ space.
\end{theorem}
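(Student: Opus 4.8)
The plan is to use the algorithm Squares-Listing of van Kreveld and de~Berg as a skeleton and to remove its only non-linear-per-query operation — the membership searches into the balanced tree $T$ — replacing them by batched radix sorting exactly along the lines of Subsection~2.1. The first step is the ``compression'' preprocessing: sort, once, the four multisets of all $x$-coordinates, all $y$-coordinates, all sums $x+y$, and all differences $y-x$ occurring among the points of $P$, and replace each value by its rank, thereby attaching to every $p\in P$ a quadruple of labels $\ell_x(p),\ell_y(p),\ell_{x+y}(p),\ell_{y-x}(p)\in\{1,\dots,n\}$. This costs $O(n\log n)=O(n\sqrt n)$ time and $O(n)$ space, and it enjoys the two properties that make it safe to reason about squares purely through labels: (i) two points of $P$ share a coordinate (resp.\ a sum, resp.\ a difference) if and only if they share the corresponding label, and (ii) a point of $P$ is pinned down by one of its coordinate-labels together with one of its sum/difference-labels, since a value of $y$ and a value of $x+y$ (or of $y-x$) jointly fix $x$.

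I would then run the four steps of Squares-Listing on the compressed points, retaining $A$ and the grouping of points into columns but discarding $T$. Consider a vertical pair in a short column, say $p_1=(x,y)$ and $p_2=(x,y+\delta)$ with $\delta>0$; its two candidate completing squares have their remaining vertices at first coordinate $x-\delta$ or $x+\delta$, and the short computation done in Subsection~2.1 shows that the ``$x-\delta$'' vertices are the unique points of $P$, if present, carrying the label pairs $\bigl(\ell_y(p_1),\ell_{y-x}(p_2)\bigr)$ and $\bigl(\ell_y(p_2),\ell_{x+y}(p_1)\bigr)$, and the ``$x+\delta$'' ones are obtained by exchanging the roles of sum and difference. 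So for every short-column pair I emit up to four \emph{query points}, each encoded by its two labels, a one-bit tag indicating whether it is to be sought in the $(\ell_y,\ell_{x+y})$ or the $(\ell_y,\ell_{y-x})$ system, and a back-pointer to its originating pair and role; in addition I insert every point of $P$ twice, once per system with the matching tag. Whenever the emitted queries accumulate to $\Theta(n)$ of them I flush: radix-sort this batch together with the $O(n)$ inserted copies of $P$ on the key (tag, first label, second label) — $O(1)$ counting-sort passes over $O(n)$ items, hence $O(n)$ time — scan the result once marking every query point adjacent to an input copy with the identical key, and then, for each originating pair whose two query points both got marked, report the corresponding square, unless (exactly the bookkeeping of the original algorithm) its other two vertices lie in a short column to the left. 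Step~3's coordinate swap is, in label space, merely a swap of the $x$- and $y$-labels together with replacing each difference-label by the label of its negation (a precomputed bijection), the sum-labels being unchanged; step~4 reruns step~2, where now every row of the surviving points has at most $\sqrt n$ points and is therefore processed.

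For correctness, property~(i) forces a marked query point to coincide with the point of $P$ it encodes — adjacency to an input copy with the same key equates the underlying $y$-value and the underlying sum or difference, hence by property~(ii) the missing coordinate, hence the point — while conversely the intended vertex, if it lies in $P$, is inserted under exactly that key; thus a pair's two query points are both marked iff the corresponding square is present. Inheriting the accounting of Squares-Listing then gives that every square is reported exactly once: a square with a vertical edge in a short column is caught in step~2, and any other square has both vertical edges among the surviving columns, hence a horizontal edge in a surviving row, which is necessarily short and hence processed in step~4, while the ``short column to the left'' rule removes the only possible duplication. For the running time, the total number of query points emitted over steps~2 and~4 is $\sum_i O(k_i^2)=O\!\bigl(\sqrt n\sum_i k_i\bigr)=O(n\sqrt n)$, the sum being over short columns with $k_i\le\sqrt n$ points and $\sum_i k_i\le n$; hence there are $O(\sqrt n)$ flushes of $O(n)$ cost each, for $O(n\sqrt n)$ in total, on top of $O(n\log n)$ preprocessing and $O(1)$ per reported square. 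At every moment we store only $P$, its sorted arrays, one batch of $O(n)$ queries, and the radix-sort workspace, i.e.\ $O(n)$ space.

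The step I expect to be the crux — and the reason sums and differences are relabeled alongside the coordinates — is the middle one: after compression, adding or subtracting $\delta$ is meaningless, so the completing vertices cannot be named arithmetically, and the whole scheme hinges on the observation that each of them is nonetheless determined by one \emph{genuine} coordinate of $p_1$ or $p_2$ together with one of their sums or differences. Granting that, the remainder is the routine but careful task of batching the resulting equality-tests into $O(\sqrt n)$ radix-sort groups so as to trade the search tree's logarithmic factor for linear-time sorting without ever exceeding $O(n)$ space, and of transplanting the original algorithm's short-versus-tall case analysis so that no square is missed or reported twice.
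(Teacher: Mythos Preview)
Your proposal is correct and is essentially the paper's own proof: the same coordinate/sum/difference relabeling to $\{1,\dots,n\}$, the same description of the completing vertices via $(\ell_y,\ell_{x+y})$ and $(\ell_y,\ell_{y-x})$ pairs, the same batching into groups of $\Theta(n)$ queries handled by radix sort, and the same short-column/row two-phase structure with the identical duplicate-suppression rule. The only cosmetic deviation is that you perform the $x\leftrightarrow y$ swap in label space (reversing the difference ranks) whereas the paper simply reruns the labeling step on the swapped points; both cost $O(n\log n)$ and are equivalent.
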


\begin{proof}
    Assume w.l.o.g. that all points have positive coordinates, as otherwise we can shift those without affecting the result.
    The theorem's statement is accomplished using the following algorithm:
    
    \vspace{3mm}    
    \textbf{Amplified-Squares-Listing}$(p_1,...,p_n)$:
    \begin{enumerate}
        \item Change the representation of each point $p=(x,y)$ to the representation $(x,y,x+y,y-x)$. \\ Map each $x$ coordinate in the input to a value in $\{1,...,n\}$. Perform a similar procedure for the $y$ coordinates, the $x+y$ coordinates and the $y-x$ coordinates.
        \item Build an array $A$ on the input points, sorted by the $x$ coordinate.
        \item For each pair of post-labeled points $p_1=(x,y_1,x+y_1,y_1-x)$ and $p_2=(x,y_2,x+y_2,y_2-x)$ with $y_2>y_1$, out of the first $n$ pairs of points in $A$ that reside in a short column -- define the query pair  $q_1=(*,y_1,x+y_2,*),q_2=(*,y_2,*,y_1-x)$ that complements $p_1,p_2$ to a square from the right, and the query pair $q_1'=(*,y_1,*,y_2-x),q_2'=(*,y_2,x+y_1,*)$ that complements to a square from the left. The wildcards replace the two unknown coordinates, which are uniquely defined anyway, given the other two.
        
        \item Place each query point defined by its $y$ and $x+y$ coordinates, i.e., points of the form of $q_1$ and $q_2'$ from step 2, in an array $B_1$ along with all input points, and apply radix sort on $B_1$ based on those two coordinates. Perform a similar procedure for points of the form of $q_2$ and $q_1'$ in another array $B_2$.
        
        \item Scan $B_1$ and mark each query point adjacent to an input point sharing the same coordinates, or to an already marked identical query point. Act similarly on $B_2$. Report each square found during this scanning, unless the other two vertices defining it are on a short column and complement to a square from the left, to avoid reporting the same square more than once.
        
        \item Perform steps $3$-$5$ iteratively on each subsequent $n$ pairs of points in $A$ that reside in a short column.
    
        \item Delete all points that are on short columns from $A$. Convert each remaining point $(x,y)$ to $(y,x)$. Apply steps $1$-$6$ on the remaining converted points.

    \end{enumerate}
    
    \textbf{Correctness}: Most of the main ideas behind the algorithm's correctness were described earlier, in the previous subsection. We now elaborate on the rest of the details. All squares having at least one edge on a short column are reported in step 5 of the algorithm, before applying step 6. The rest have both edges on long columns, so they are reported in step 7.
    
    Given a pair $p_1=(x_1,y_1)$ and $p_2=(x_1,y_2)$ with $y_2>y_1$, the pair $q_1=(x_2,y_1)$ and $q_2=(x_2,y_2)$ that complement to a square from the right (i.e., $x_2>x_1$) maintains that $x_2+y_1=x_1+y_2$ since they lie on a diagonal with the same Manhattan distance from the origin. Putting if differently,
    $$x_2=x_1+(y_2-y_1) \Longrightarrow x_2+y_1=x_1+(y_2-y_1)+y_1=x_1+y_2$$
    As for $q_2$, the latter equality also shows that $x_2-y_2=x_1-y_1$ by subtracting $y_1+y_2$ from both sides. These are exactly the query points defined by the algorithm, up to the labeling that maintains those properties. The analysis for the pair that complements to a square from the left is symmetric.
    
    \vspace{3mm}
    
    \textbf{Complexity}: As for the running time, the first two steps of the algorithm cost $O(n\log n)$ using some standard sorting algorithm, e.g., merging sort. Each time step 3 is performed, at most $2n$ query points are defined in $O(n)$ time. Each time steps 4-5 are performed, two arrays, each of size at most $n$, are sorted and then scanned in a linear time. Defining the obtained squares in step 5, based on the marked queries, can also be carried out in $O(n)$ time. The total number of query points defined after finishing step 6 is
    $$O\left(\sum_i s_i^2\right)\leq O\left(\sum_i s_i\sqrt{n}\right)=O\left(\sqrt{n}\cdot \sum_i s_i\right) \leq O\left(n\sqrt{n}\right)$$
    where $s_i$ denotes the length of the $i$'th short column, i.e., the number of points in it. As mentioned, each batch of $O(n)$ queries is handled in $O(n)$ time, so the total running time analysis for steps 1-6 of this algorithm is $O(n\sqrt{n})$. The analysis for the converted points in step 7 is symmetric. It only remains to notice that the number of pairs, this time, is 
    $$O\left(\sum_i d_i^2\right)\leq O\left(\sum_i d_i\sqrt{n}\right)=O\left(\sqrt{n}\cdot \sum_i d_i\right) \leq O\left(n\sqrt{n}\right)$$
    where $d_i$ denotes the length of the $i$'th row out of the remaining rows, after deleting the short columns. We used the fact that there are at most $\frac{n}{\sqrt{n}}$ long columns, as otherwise there are more than $n$ input points. Therefore, the length of each remaining row, after deletion, is at most $\frac{n}{\sqrt{n}}=\sqrt{n}$, and all points are treated. We remark that parts of this analysis occur in \cite{10.1007/3-540-52292-1_25}, but we provide those in our proof for completeness.
    
    As for the space complexity, note that each of the data structures defined in the above algorithm is of size $O(n)$, and that each step involving those structures does not cost more than $O(n)$ space.
\end{proof}

\section{The General Case}

In this section, we treat the generalized version of the problem from the previous section. We describe an algorithm that enumerates all scaled copies of any arbitrary full-dimensional pattern in $d$-dimensions, assuming the pattern is of a fixed size. Since such copies are determined by the images of any two points in the pattern, a trivial listing algorithm defined similarly to the trivial algorithm from the previous section can be easily obtained. We can safely assume w.l.o.g. that there are two pattern points that share a coordinate, as otherwise we can rotate the input. The trivial algorithm merely chooses these two pattern points and calculates in $O(1)$ the transformations that produce each of the other pattern points. Then it scans each pair of the background points that share this coordinate, and verifies whether they could be complemented to that pattern using the computed transformations. The running time of this algorithm is therefore $O(n^2\log n)$.

In \cite{10.1007/3-540-52292-1_25}, the authors managed to generalize their axis-parallel squares listing algorithm to an algorithm that enumerates all axis-parallel full-dimensional hypercubes in $d$-dimensions, outputting the list of all hypercubes in time $O(n^{1+1/d}\log n)$. Their algorithm relies on similar ideas that guided the treatment of the planar case, namely the partitioning to short axis-parallel lines (columns, in the planar case), handling those first. In \cite{Bra2002CombinatorialGP}, the latter result was further generalized to any full-dimensional pattern in $d$-dimensions, providing an algorithm that lists all copies of it, and that admits the same running time complexity.


\subsection{Warm Up - Axis-Parallel Hypercubes}
We first treat the case of enumerating all full-dimensional axis-parallel hypercubes. We provide an algorithm with a running time complexity of $O(n^{1+1/d})$ and with a linear space complexity, addressing the open question of matching the lower bound from Theorem 1. Our algorithm builds on the techniques and the observations from the previous section, with some additions and adjustments so it complies with the properties of the $d$-dimensional space.

One observation that is true for the $d$-dimensional case, is that any two points with all but one equal coordinate, determine $2^{d-1}$ full-dimensional possible hypercubes. Denote such a pair of points by $t$ and $r$. Each of these hypercubes is uniquely associated with a vector $e\in \{-1,1\}^{d-1}$, in which the $j$'th coordinate determines the direction of progress from $t$ and $r$ along the $j$'th axis, where $j$ is any coordinate except the one in which they differ.
In a similar fashion to the planar case, we would like to relabel the input coordinates, their sums and their differences, place them in an array, radix sort it and mark the correct vertices that complement to a hypercube in an efficient manner, while scanning this array. Moreover, we scan only pairs of points lying on short axis-parallel lines, similarly to the planar case, only that this time, by ''short'' we mean having not more than $n^{1/d}$ point on it.

Keeping these ideas in mind, we are ready to present the generalized version of Theorem 2:

\begin{proposition}
    Given a set of points $P$ of size $n$ in $d$-dimensional Euclidean space, all axis-parallel full-dimensional hypercubes defined by points from $P$ can be enumerated in time $O(n^{1+1/d})$ and $O(n)$ space.
\end{proposition}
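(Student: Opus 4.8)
The plan is to mirror the algorithm \textbf{Amplified-Squares-Listing}, replacing its two-phase ``columns then rows'' sweep by a sweep of $d$ phases over the $d$ coordinate axes, and replacing the single pair of query points used to complete a square by the $2^{d}-2$ query points needed to complete a hypercube. I would set the threshold $\tau:=\lceil n^{1/d}\rceil$ and call an axis-$i$ line \emph{short} (relative to the current point set) if it carries at most $\tau$ points. Phase $i$, for $i=1,\dots,d$, operates on the points not yet deleted and proceeds as follows: (a) re-encode every surviving point $p=(x_{1},\dots,x_{d})$ by its $d$ coordinates together with all $\binom{d}{2}$ pairwise sums $x_{j}+x_{k}$ and all $\binom{d}{2}$ pairwise differences $x_{j}-x_{k}$, and relabel the value set of each of these $O(d^{2})=O(1)$ attribute types by rank into $\{1,\dots,O(n)\}$; (b) for each pair of points $t,r$ lying on a common short axis-$i$ line and each direction vector $e\in\{-1,1\}^{d-1}$, emit the $2^{d}-2$ missing corners of the hypercube with edge $\{t,r\}$ and direction $e$ as query points; (c) batch the query points together with all input points into groups of size $O(n)$, radix-sort each batch once for every ``profile'' recording which attribute type occupies which slot (there are $O(1)$ such profiles), scan linearly, and mark each query point that coincides with an input point or with an already-marked identical query point; (d) report a candidate hypercube once all of its emitted corners are marked, using the analogue of the left-priority rule of \textbf{Amplified-Squares-Listing} to avoid duplicate reports; (e) delete every point lying on a short axis-$i$ line and pass to phase $i+1$.

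The soundness of step (b) rests on a direct extension of the $x+y$, $y-x$ device of \textbf{Amplified-Squares-Listing}. A corner $c$ of the candidate cube agrees with $t$ (equivalently with $r$) on every coordinate in which $c$ is not displaced, carries either $t_{i}$ or $r_{i}$ in coordinate $i$, and for each coordinate $j$ in which it is displaced by $\pm\delta$, where $\delta=|t_{i}-r_{i}|$, a short case check over $e_{j}\in\{-1,1\}$ and $c_{i}\in\{t_{i},r_{i}\}$ (using $t_{j}=r_{j}$, valid since $t,r$ lie on a common axis-$i$ line) shows that one of $c_{i}+c_{j}$ and $c_{j}-c_{i}$ equals the corresponding pairwise sum or difference of coordinates $i$ and $j$ of $t$ or of $r$. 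Hence $c$ is determined by $d$ relabeled attributes — its coordinate-$i$ label, a coordinate-$j$ label for each non-displaced $j\ne i$, and one ``pairwise-with-$i$'' label for each displaced $j$ — and two points sharing all $d$ of these relabeled attributes are equal, so the marking in step (c) is exact; moreover only $O(1)$ profiles occur across all $(e,c)$, so $O(1)$ radix passes per batch suffice.

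For coverage, write $P_{\ell}$ for the point set entering phase $\ell$, so $P_{1}=P$. The key lemma is that $P_{d+1}=\varnothing$: a point surviving all $d$ phases lies, in each phase $\ell$, on an axis-$\ell$ line of $P_{\ell}$ carrying more than $\tau$ points, so it has more than $\tau$ companions differing from it only in coordinate $d$, each of which (still present after phase $d$, hence present in $P_{d-1}$ and surviving phase $d-1$) has more than $\tau$ companions differing only in coordinate $d-1$, and so on; since the companions introduced at level $k$ are free precisely in coordinate $k$ while coordinates $1,\dots,k-1$ stay fixed, the tuples $(x_{k},\dots,x_{d})$ of the level-$k$ points are pairwise distinct, so iterating down to level $1$ produces more than $\tau^{d}\ge n$ distinct points of $P$ — a contradiction. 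Granting this, for any axis-parallel hypercube $H\subseteq P$ there is a first phase $i^{*}$ at which some vertex $v$ of $H$ is deleted; all of $H$ is still present at the start of phase $i^{*}$, and $v$ together with its axis-$i^{*}$ neighbour $v'$ in $H$ (which lies on the same axis-$i^{*}$ line, and in $P_{i^{*}}$) forms a pair on a short axis-$i^{*}$ line of $P_{i^{*}}$, so this pair is processed with the direction vector of $H$ in step (b), all of $H$'s corners are found in step (c), and $H$ is reported once in step (d).

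As for the running time, phase $i$ emits $O\bigl(\sum_{\ell\text{ short}}|\ell|^{2}\bigr)=O\bigl(\tau\sum_{\ell}|\ell|\bigr)=O(\tau\,|P_{i}|)=O(n^{1+1/d})$ query points, and step (c) spends $O(n)$ per batch of size $O(n)$ since there are $O(1)$ radix profiles, the relabeled attributes live in $\{1,\dots,O(n)\}$, and the scan, the marking and the per-candidate reporting are linear; together with the $O(n\log n)=O(n^{1+1/d})$ cost of re-encoding and ranking, phase $i$ runs in $O(n^{1+1/d})$ time, and there are $d=O(1)$ phases, while every data structure used has size $O(n)$. I expect the main obstacle to be the lemma $P_{d+1}=\varnothing$ together with the bookkeeping that promotes it to ``every hypercube reported exactly once'': making the disjoint-coordinates distinctness argument fully rigorous, and designing the left-priority de-duplication so that it still selects a single representative even when $H$ has several of its $2^{d-1}$ axis-$i^{*}$ edges on short lines of $P_{i^{*}}$ at once. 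By contrast, the $d$-dimensional sum/difference encoding in step (b) — requiring the case check above and the observation that only $O(1)$ profiles arise — should be a routine lift of the planar argument.
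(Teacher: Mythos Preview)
Your proposal is correct and follows essentially the same approach as the paper: the same $d$-phase sweep with threshold $n^{1/d}$, the same pairwise sum/difference relabeling to encode the query corners, the same batched radix-sort-and-mark procedure, and the same cascading ``more than $(n^{1/d})^d$ points'' contradiction to show that no point survives all $d$ phases. The de-duplication issue you flag is resolved in the paper by the simple rule of reporting a hypercube only from the axis-$i$ edge whose line has smallest index among the short axis-$i$ lines carrying an edge of that cube, which is the direct analogue of the planar left-priority rule.
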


\begin{proof}
    Assume w.l.o.g. that all points have positive coordinates, as otherwise we can shift those without affecting the result.
    The following algorithm establishes the proposition's statement:
    
    \vspace{3mm}

\textbf{Amplified-Hypercubes-Listing}$(p_1,...,p_n)$:
\begin{enumerate}
    \item For each input point $p=(x_1,x_2,...,x_d)$, add the following additional list of coordinates: $$\left((x_i-x_j),(x_i+x_j) ~~|~~\forall 1\leq i<j\leq d\right)$$
    Map each of those augmented coordinates, including the original ones, to a label in $\{1,...,n\}$.
    \item Build an array $A$ on the input points, sorted by each of their coordinates based on the coordinates' order, except for the last \textit{original} coordinate (i.e., $x_d$).
    \item For each pair of points $t,r$ that lie in the same short axis-parallel line, having the same coordinates except for the last, out of the first $n$ pairs with this property, add $2^{d}-2$ query points which define together a hypercube. Do this for all $2^{d-1}$ possible hypercubes in the following manner.
    First, any axis-parallel hypercube having $t$ and $r$ as its vertices, is defined using one additional vertex 
    $$r'=(r_1+e_1\cdot \delta,...,r_{d-1}+{e_{d-1}} \cdot \delta,t_d)$$
    where $r_i$ is the $i$'th coordinate in $r$ (and similarly for $t$), $\delta=r_d-t_d$ and $e\in \{-1,1\}^{d-1}$. The rest of the vertices in each such hypercube are defined similarly, except for replacing all subsets of the coordinates in the vector $e$ by zeros, and using $t$ instead of $r$.
    
    Now, define the coordinates that are to be searched -- \textit{not} in the aforementioned arithmetic manner, but using the labels from step 1 instead.
    That is, translate $r_i+e_i\cdot \delta$ to the label of $r_i+r_d$ if $e_i=1$, and to that of $r_i-r_d$ otherwise. Fill in the unknown coordinates using wildcards, as those are uniquely defined anyway, given the others.
    
    
    \item
    Place all query points defined by the same coordinates in an array along with all input points. Apply radix sort on each of those arrays, according to the known coordinates in it.
    \item Scan each array from step 4, and mark each query point adjacent to an input point sharing the same coordinates, or to an already marked identical query point. Report all hypercubes that were found (by checking that all vertices are present for each hypercube), except for hypercubes that have two vertices on a short axis-parallel line of the same type, only with a smaller index.
    \item Perform steps 3-5 on each subsequent $n$ pairs of points in $A$ on a short axis-parallel line of that type.
    \item Delete all points that are on a short axis-parallel line of a currently analyzed type from $A$, and convert each remaining point $(x_1,x_2,...,x_d) $ to $(x_d,x_1,...,x_{d-1})$. Apply steps 1-6 on the remaining converted points. This step is carried out $d-1$ times.
    \end{enumerate}
    
    \textbf{Correctness}:  Almost all details regarding the analysis of the correctness of this algorithm already appeared in that of Amplified-Squares-Listing$(p_1,...,p_n)$. As for the phase of searching by labels, note that if $r_j'$ is some unknown coordinate, for which we only have an undesired arithmetic definition based on the coordinate $r_j$ and on $\delta$, then
    $$r_j' =r_j+\delta=r_j+(r_d-t_d) \Longrightarrow r_j'+t_d=r_j+r_d$$
    which exactly corresponds to the labels that the algorithm searches (the treatment of positive or negative values of $\delta$ is symmetric), and similarly for subtraction.

    \vspace{3mm}
    
    \textbf{Complexity}: The running time analysis is similar to that of Amplified-Squares-Listing$(p_1,...,p_n)$ with the following differences. The relabeling in step 1 and the sorting of $A$ in step 2; the definitions of $2^{d-1}$ hypercubes, each consisting of additional $2^d-2$ points in step 3; applying radix sort on $2^{O(d)}$ arrays in step 4; the linear scanning of those arrays in step 5 and applying step 7 for $d-1$ times -- all of those involve multiplying the complexity of the planar case by at most a constant factor of $2^{O(d)}$. The only major difference concerns the total number of query points defined each time step 7 in the algorithm is invoked. In the treatment of the first $d-1$ axis-parallel lines, only short lines are considered, so the cost for each line is
    $$O\left(\sum_i s_i^2\right)\leq O\left(\sum_i s_i\cdot n^{1/d}\right)=O\left(n^{1/d}\cdot \sum_i s_i\right) \leq O\left(n^{1+1/d} \right)$$
    where $s_i$ denotes the length of the $i$'th short axis-parallel line of that form.
    Regarding the treatment of the last axis-parallel line, we note that all remaining such lines are short. Assume that there exists a long remaining line. Then all points on it are on long axis-parallel lines, with respect to some axis, which induces more points that are on long lines with respect to another axis, yielding that there are more than $\left(\left(n^{1/d}\right)\right)^d$ input points in total, which is obviously a contradiction. Thus, all input points are treated with the mentioned running time, so the total running time analysis is indeed $O(n^{1+1/d})$. The analysis of the space complexity is similar, with a constant multiplicative factor of $2^{O(d)}$ compared to the analysis of the planar case.

\end{proof}

\subsection{General Scaled Copies of a Pattern}
 
At this point, we have all the necessary building blocks to present our final result of this paper, that deals with enumerating all full-dimensional copies of a given fixed-size pattern. We show an algorithm for this task that works in time $O(n^{1+1/d})$ with a linear space complexity. This answers the open question of realizing the lower bound of Elekes and Erd\H{o}s \cite{elekes1994similar} regarding the maximum possible number of occurrences of such patterns, raised in \cite{Bra2002CombinatorialGP}. In the following result, we rely on an observation from Section 2, by which we can generate identical labels by means of rotating the input points, which is equivalent to using the sum (or difference) of the coordinates in the simple case of hypercubes.

The full statement of our theorem is as follows:
\begin{theorem}
    Given a fixed set of points $Q$ of full dimension in the $d$-dimensional Euclidean space, and a set of points $P$ of size $n$, all scaled copies of $Q$ defined by points from $P$ can be enumerated in time $O(n^{1+1/d})$ and $O(n)$ space.
\end{theorem}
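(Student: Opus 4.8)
The plan is to reduce the general pattern case to the hypercube machinery of Proposition~3.3 by choosing a convenient pair of ``anchor'' points of the pattern and a linear change of coordinates that makes the arithmetic relations defining the remaining pattern points expressible as sums and differences of input coordinates (up to relabeling). Concretely, since $Q$ is fixed and full-dimensional, we may assume w.l.o.g.\ (after an initial rotation of the input, as in the warm-up) that two of its points, say $q^{(1)}$ and $q^{(2)}$, agree in all but one coordinate; call that coordinate the $d$-th. A scaled copy of $Q$ is then determined by the images $t,r$ of $q^{(1)},q^{(2)}$, which must also agree in coordinates $1,\dots,d-1$ and differ in coordinate $d$. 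Writing $\delta$ for that difference, every other pattern point's image has the form $\big(t_1+\alpha_k^{(i)}\delta,\dots,t_{d-1}+\alpha_k^{(i)}\delta,\; t_d+\beta_k\delta\big)$ — wait, more precisely each image point $p^{(k)}$ has coordinates $t_j + c_{k,j}\,\delta$ for rational constants $c_{k,j}$ determined by $Q$ (here $c_{k,j}$ replaces the $e_j\in\{-1,1\}$ of the hypercube case, and we clear denominators as Elekes--Erd\H{o}s do, using that $Q$ has rational coordinates / is fixed).

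The key observation is the one flagged in the excerpt: to search for an unknown coordinate $p^{(k)}_j = t_j + c_{k,j}\,\delta$ without arithmetic on labels, we rewrite $p^{(k)}_j + \frac{c_{k,j}-1}{\,c_{k,j}\,}\cdot(\text{something})$ — more cleanly, from $\delta = r_d - t_d$ we get the identity $p^{(k)}_j + (c_{k,j}-1)\,t_d \;=\; t_j + c_{k,j}\,r_d - (c_{k,j}-1)\,(r_d - \delta)$, which rearranges into a fixed rational combination of $t_j, t_d, r_d$ on one side and $p^{(k)}_j$ on the other. The point is that for each pattern coordinate $(k,j)$ there is a fixed affine form $L_{k,j}$ in the two-dimensional span of $\{e_d, e_j\}$ such that $L_{k,j}$ evaluated at $p^{(k)}$ equals the same form evaluated at an affine combination of $t$ and $r$; since there are only $O(1)$ pattern points and $O(1)$ such forms, in Step~1 we augment every input point with the (constantly many) values of all these forms $L_{k,j}$ applied to it, and relabel each augmented coordinate into $\{1,\dots,n\}$ exactly as before. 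Then the query point for $p^{(k)}$ is specified by the \emph{labels} of these forms (taken from $t$ and $r$), with wildcards for the still-undetermined coordinates, and the radix-sort-and-mark procedure of Proposition~3.3 goes through verbatim: place all query points with a common set of known coordinates into an array with the input points, radix sort, scan linearly marking queries adjacent to matching input points or already-marked identical queries, and report a copy when all its $|Q|$ vertices are marked. The short-line partitioning (lines with at most $n^{1/d}$ points), the iteration over the $d$ coordinate orderings with the rotation $(x_1,\dots,x_d)\mapsto(x_d,x_1,\dots,x_{d-1})$ in between, and the duplicate-avoidance rule are all inherited unchanged; the complexity bound $O(n^{1+1/d})$ follows from the identical $\sum_i s_i^2 \le n^{1/d}\sum_i s_i \le n^{1+1/d}$ calculation, with an extra $2^{O(1)}=O(1)$ factor absorbing the constant pattern size.

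The main obstacle, and the part needing genuine care rather than bookkeeping, is \emph{which} affine forms to relabel and verifying that they suffice — i.e.\ establishing that for the chosen anchor pair every unknown coordinate of every pattern point can be pinned down using only labels of forms that are also ``known'' at query-definition time (functions of $t$ and $r$ alone, not of the unknown vertices). In the hypercube case this worked because the two forms $x_i\pm x_d$ sufficed and the $e_i\in\{-1,1\}$ gave a clean dichotomy; for a general pattern one must check that the constantly many forms $\{L_{k,j}\}$ arising from $Q$ are consistent — in particular that when a coordinate value is shared among several pattern points the corresponding forms agree, so that a single label per form is well defined — and that the whole copy is reconstructible from the anchor pair plus these labels (this is where full-dimensionality of $Q$ is used, exactly as in Elekes--Erd\H{o}s, to guarantee the pattern is rigid and the constants $c_{k,j}$ are well-defined rationals). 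One must also confirm the short-line argument still applies: a scaled copy of $Q$ has two anchor images on a common axis-parallel line, and if that line is long for all $d$ coordinate directions we again exceed $(n^{1/d})^d = n$ points, so every copy is caught in some iteration. Once these structural points are nailed down, the rest is a transcription of the proof of Proposition~3.3 with $2^{d-1}$ replaced by the number of copies of $Q$ sharing a given anchor pair (a constant) and $2^d-2$ replaced by $|Q|-2$.
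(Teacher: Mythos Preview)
Your relabeling idea --- augmenting each input point with the values of finitely many fixed linear forms $L_{k,j}$ determined by the pattern, so that each unknown coordinate of a query vertex can be matched against a \emph{label} already computed from the anchor pair $(t,r)$ --- is essentially the paper's step~2 (``calculate the rotation that places $r$ and $p$ on the same column; thus $p$ and $r$ share the same value of some linear transformation of their coordinates''), and your radix-sort-and-mark bookkeeping correctly mirrors steps~4--6. So the core mechanism is right.

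The genuine gap is in the iteration over dimensions. You claim that ``the iteration over the $d$ coordinate orderings with the rotation $(x_1,\dots,x_d)\mapsto(x_d,x_1,\dots,x_{d-1})$ in between \dots\ are all inherited unchanged'' from the hypercube case. This does not work for a general pattern. In the hypercube case the cyclic permutation succeeds only because the hypercube has an edge in \emph{every} coordinate direction, so after cycling there is still an anchor pair aligned with the new $d$-axis. For your fixed anchor pair $q^{(1)},q^{(2)}$, after one cycle their images differ only in the new coordinate~$1$, whereas the algorithm in that iteration scans short lines parallel to the new coordinate~$d$; the anchors are simply not on those lines, and no copies are found. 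Your sentence ``if that line is long for all $d$ coordinate directions'' betrays the confusion: a single anchor line has a single direction, so the $\bigl(n^{1/d}\bigr)^d>n$ contradiction cannot be run with one anchor pair.

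The paper's fix (its step~8) is precisely to drop the coordinate cycling and instead, in each of the $d$ iterations, \emph{choose a new anchor pair from $Q$} whose direction is linearly independent of all previously chosen directions --- this is exactly where full-dimensionality of $Q$ is spent --- then re-rotate input and pattern so that new pair lies in a column, and redo the short-column processing. The long-line contradiction then goes through because after $d$ rounds one has $d$ independent directions each contributing a factor $n^{1/d}$. Your proposal has the labeling half of the argument but not this half; once you replace the inherited cyclic rotation with the ``choose a fresh non-parallel anchor pair and re-rotate'' step, the rest of your sketch lines up with the paper's proof.
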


\begin{proof}
    Assume w.l.o.g. that all points have positive coordinates, as otherwise we can shift those without affecting the result. Assume also that the pattern has more than two points, as the other case may be easily handled.
    The following algorithm is used to prove the theorem:
    
    \vspace{3mm}

\textbf{Amplified-Scaled-Copies-Listing}$(p_1,...,p_n)$:
\begin{enumerate}
    \item Rotate the input points and the pattern points simultaneously, such that two of the pattern points, $p$ and $q$, lie in the same column afterwards (in $d$-dimensions -- in some axis-parallel line). Calculate the transformations that produce the other pattern points out of $p$ and $q$.
    
    \item For a point $r\neq p,q$ in $Q$, calculate the rotation that places $r$ and $p$ on the same column. Thus, $p$ and $r$ share the same value of some linear transformation of their coordinates. Label the original input points' coordinates, along with that linear transformation value for those, using labels in $\{1,...,n\}$ (in $d$-dimensions -- perform this labeling procedure in all dimensions, as the pattern is full-dimensional).
    
    \item Build an array $A$ on the input points, sorted by their $x$ coordinate (in $d$-dimensions -- a sorting based on all coordinates except for one).
    
    \item For each pair $r$ and $t$ residing in a short column (in $d$-dimensions -- short axis parallel line that corresponds to the sorting from step 3), calculate all values of the relevant pattern points according to the transformations from step 1. Express those using the corresponding labels obtained from step 2, until defining $n$ such query set. If there are other pairs among the pattern points determining a line parallel to that through $r$ and $t$, let $r$ and $t$ have also the role of that pair when defining the other points that complement to the pattern, in a similar manner to the treatment of the hypercubes case.
    
    \item Place all query points defined by the same (augmented) coordinates in an array with all input points. Apply radix sort on each such array.
    
    \item Scan each array from step 5, and mark each query point adjacent to an input point sharing the same coordinates, or to an already marked identical query point. Report all copies that were found, except for copies that are duplicate (may be generated by two pairs in the pattern having the same slope).
    \item Perform steps 4-6 on each subsequent $n$ pairs of points in $A$ on a short column (in $d$-dimensions -- axis-parallel line of that type).
    \item Delete all points that are on a short column (in $d$-dimensions -- axis-parallel line of a currently analyzed type) from $A$. Apply steps 1-6 on the remaining points, choosing this time a different pair which is not parallel to the previous pairs chosen. This step is carried out $d-1$ times.    
\end{enumerate}
    
    \textbf{Analysis}: The correctness and the complexity analysis of this algorithm are essentially the same as those of Theorem 2 and Proposition 3, with some minor exceptions. The labeling of the points, and searching by those labels, are both legitimate, as justified earlier. The first two steps cost $O(1)$, as the pattern is of fixed size. The analysis from the proof of Proposition 3, by which at each iteration (dimension), the number of treated points is $O(n^{1+1/d})$ is similar here, as we did not use there the fact that the lines are axis-parallel, but rather only the fact that the corresponding vectors form an independent set. Thus, the argument by which there would be more than $\left(n^{1/d}\right)^d$ points, if the last iteration includes a long axis-parallel line, still holds.
\end{proof}

\section{Conclusion and Further Work}

In this paper, we analyzed the problem of efficiently enumerating all axis-parallel squares within a set of $n$ points in the plane. We provide an algorithm with a running time of $O(n\sqrt{n})$ for this task, thereby solving an open question for more than three decades \cite{10.1007/3-540-52292-1_25} of matching the corresponding lower bound on the running time of the optimal algorithm for this task, as the maximum possible number of such squares is $\Theta(n\sqrt{n})$. We consolidate our result by treating the $d$-dimensional case of listing all full-dimensional hypercubes in $\mathbb R^d$ in time $O(n^{1+1/d})$. Next, we show that our framework can even solve the general case of reporting all scaled copies of a given pattern of fixed size in time $O(n^{1+1/d})$, answering an open question from \cite{Bra2002CombinatorialGP} by matching a similar lower bound due to Elekes and Erd\H{o}s \cite{elekes1994similar}. All of our algorithms admit a linear space complexity.

As for the techniques used, we relied on some existing algorithms for those tasks, amplifying and speeding up their running time using bucket-based methods (non-comparative sorting, labeling), sweep-line scanning and more. As far as we are aware of, the combinations of these techniques in the way we deployed was not noted in the literature so far for dealing with tasks having a similar flavor.

One immediate open question is whether these can be applied or adjusted for other pattern matching problems, such as copies obtained from rotating a patters, combining rotations and scaling and more, solving some of the open problems in this field. Another possibility is applying them even for different computational geometric problems. Another questions is whether the task of finding a single copy of a general pattern is an easier computational task than enumerating all copies of it. For the case where the pattern is not of a constant size, but rather input dependent, the best known algorithm \cite{Bra2002CombinatorialGP} includes a multiplicative factor of the magnitude of the pattern size, as compared to the combinatorial bound on the maximum possible number of its copies, that does not include that factor. Thus, also shaving this factor from the computational results becomes another important open question for further research.

\subsubsection*{Acknowledgements}

We would to thank the students in the Algorithmic Problem Solving class (2019/2020) at the Hebrew University of Jerusalem for bringing to our attention that this was a known open problem in the field of computational geometry.

\bibliographystyle{plain}
\bibliography{patterns}

\appendix

\end{document}